\documentclass[11pt,a4paper]{article}
\usepackage{amsmath,amssymb,amsthm,amscd,verbatim,enumerate}
\usepackage{graphicx,subfigure}
\usepackage[utf8]{inputenc}
\usepackage[colorlinks=true,citecolor=black,urlcolor=blue]{hyperref}
\usepackage[lmargin=31mm,rmargin=31mm,bmargin=31mm,tmargin=31mm]{geometry}
\usepackage{multirow}
\usepackage{xspace}
\usepackage{cancel}

\makeatletter
\let\@fnsymbol\@alph
\makeatother

\newcommand{\pat}{\textit{path\/}}

\setlength{\parindent}{0cm}
\setlength{\parskip}{2ex}
\allowdisplaybreaks

\renewcommand{\leq}{\leqslant}
\renewcommand{\geq}{\geqslant}

\theoremstyle{plain}
\newtheorem{theorem}{Theorem}
\newtheorem{lemma}[theorem]{Lemma}

\interfootnotelinepenalty=10000

\begin{document}

\title{Analysis of Farthest Point Sampling for  Approximating Geodesics in a Graph}
\author{Pegah Kamousi\thanks{D\'{e}partement d'Informatique, Universit\'{e} Libre de Bruxelles,
Belgium, E-mail: pegahk@gmail.com} \and 
Sylvain Lazard\thanks{Inria, Loria,  Nancy, France, E-mail: sylvain.lazard@inria.fr} \and 
Anil Maheshwari\thanks{School of Computer Science, Carleton University, Canada, E-mail:
anil@scs.carleton.ca} \and 
Stefanie Wuhrer\thanks{Cluster of Excellence MMCI, Saarland University, Germany, E-mail: swuhrer@mmci.uni-saarland.de}}

\maketitle

\begin{abstract}
A standard way to approximate the distance between any two vertices  $p$ and $q$ on a mesh  is to compute, in the associated graph, a shortest path  from $p$ to  $q$ that goes through one of $k$ sources, which are well-chosen vertices. Precomputing the distance between each of the $k$ sources to all vertices of
the graph yields an efficient computation of  approximate distances between any two vertices. One standard method for choosing $k$ sources, which has been used extensively and successfully for isometry-invariant surface processing,  is the so-called  {\it Farthest Point Sampling} (FPS), which starts with a random vertex as the first source, and iteratively selects the farthest vertex from the already selected sources.

In this paper, we analyze the stretch factor $\mathcal{F}_{FPS}$ of approximate geodesics computed using FPS, which is the maximum, over all pairs of distinct vertices, of their approximated distance over their geodesic distance in the graph. We show that $\mathcal{F}_{FPS}$ can be bounded in terms of the minimal value $\mathcal{F}^*$ of the stretch factor obtained using an optimal placement of $k$ sources as $\mathcal{F}_{FPS}\leq 2 r_e^2 \mathcal{F}^*+ 2 r_e^2  + 8 r_e + 1$, where $r_e$ is the ratio of the lengths of the longest and the shortest edges of the graph. This provides some evidence explaining why farthest point sampling has been used successfully for isometry-invariant shape processing. Furthermore, we show that it is NP-complete to find $k$ sources that minimize the stretch factor.
\end{abstract}

\textbf{Keywords:} Farthest Point Sampling; Approximate Geodesics; Shortest Paths; Planar Graphs; Approximation Algorithms


\section{Introduction}
\label{sec:intro}

In this work, we analyze the stretch factor of approximate geodesics computed on triangle meshes or more generally in graphs. In our context, a \emph{triangle mesh} represents a discretization of a two-dimensional manifold, possibly with boundary, embedded in $\mathbb{R}^d$ for a constant dimension $d$ using a set of vertices $V$, edges $E$, and triangular faces $F$. Such a triangle mesh can be viewed as a (hyper)graph $G=(V,E,F)$ such that each edge is adjacent to one or two triangles and the triangles incident to an arbitrary vertex can be ordered cyclically around that vertex. Due to discretization artifacts, the triangle mesh may contain holes, and different parts of the surface may even intersect in the embedding. However, we assume that the graph structure $G$ is planar and connected (see Figure~\ref{fig:mesh} for an illustration). For $d=3$, this definition of a triangle mesh is commonly used in
geometry processing when analyzing models obtained by scanning real-world objects.

Given a connected planar triangle graph $G$ with $n$ vertices, where every edge has a positive length (or weight), we consider the problem of approximating the {\it geodesic} distances between pairs of vertices in $G$, where distances are measured in the graph theoretic sense. Specially, given an integer $k$, our goal is to select a set $S=\{s_1,\ldots,s_k\}$ of $k$ vertices of $G$ that minimize the {\it stretch factor}, defined as the value 
$$\max_{(p,q)\in V,\ p\neq q}~ \min_{s_i\in S}\; \frac{d(p,s_i)+d(s_i,q)}{d(p,q)},$$ 
where the function $d(.,.)$ measures the shortest geodesic distance between two vertices. Throughout this paper, we use for simplicity the notation $\max_{p,q}$ for $\max_{(p,q)\in V,\ p\neq q}$.

The problem of approximating geodesic distances on surfaces represented by triangle meshes arises when studying shapes (usually shapes embedded in $\mathbb{R}^2$ or in $\mathbb{R}^3$) that are isometric, which means that they can be mapped to each other in a way that preserves geodesics. Isometric shapes have been studied extensively recently~\cite{Bronstein2006,Elad2003,Lipman2009,Sun2009,Tevs2011} because many shapes, such as human bodies, animals, and cloth, deform in a near-isometric way as a large stretching of the surfaces would cause injury or tearing.

\begin{figure}[t]
\begin{center}
	\includegraphics[width=3.5in]{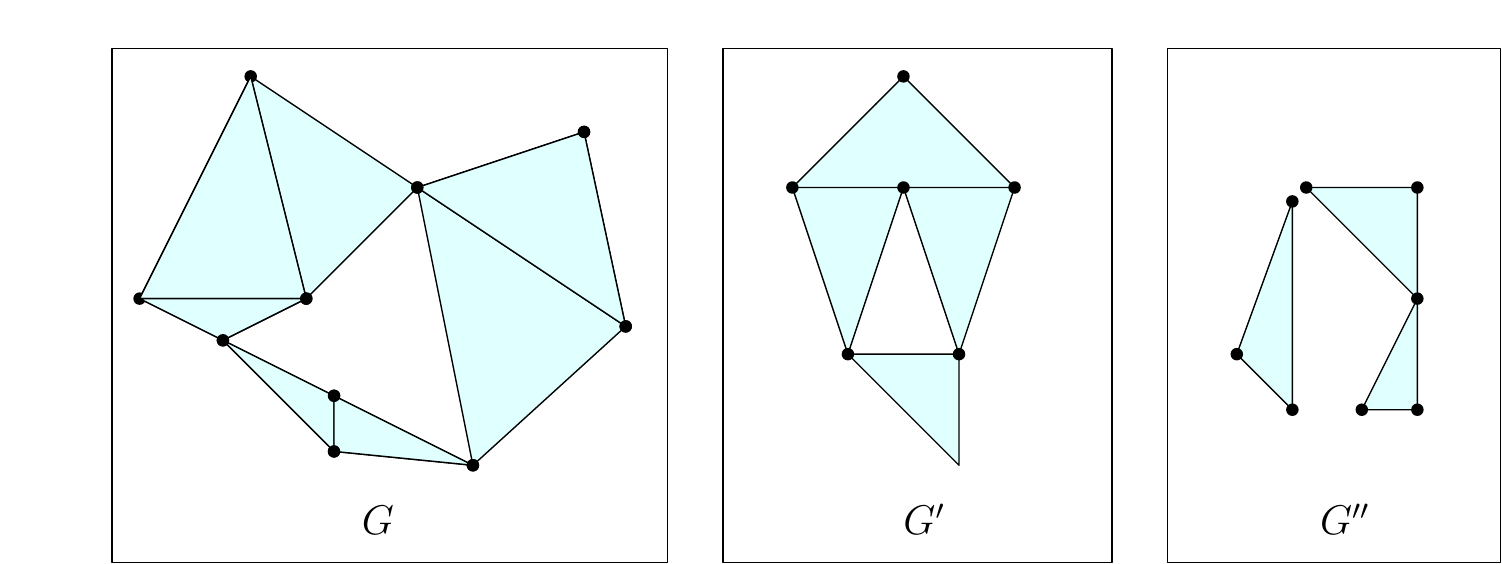}
\end{center}
\caption{The graph $G=(V,E,F)$ is a triangle mesh, while $G'$ has a non-triangular face, and $G''$ is not connected.}
\label{fig:mesh}
\end{figure}

In order to analyze near-isometric shapes, it is commonly required  to compute, on each shape, geodesics between many pairs of vertices, and to compare the  corresponding geodesics in order to compute the amount of stretching of the surface. Consider the problem of computing geodesic distances on a surface represented by a triangle mesh, where distances are measured on the graph induced by the vertices and edges of the triangulation. To solve the \emph{single-source shortest-path} (SSSP) problem on $G$, Dijkstra's algorithm~\cite{Dijkstra1959} takes $O(n \log n)$ time.\footnote{Alternatively, we can use the linear time algorithm of \cite{HKRS97} for the SSSP problem, as the underlying graph is planar.} To solve the \emph{all-pairs shortest-path} (APSP) problem, we can run Dijkstra's algorithm starting from each source point, yielding an $O(n^2 \log n)$ time algorithm. While there are more efficient methods, the APSP problem has a trivial  $\Omega(n^2)$ lower bound. In practical applications, a typical triangle mesh may contain from $100,\!000$ to $500,\!000$ vertices and, for such large meshes, it is impractical to use algorithms that take $\Omega(n^2)$ time.

To allow for a reduced complexity, instead of considering the APSP problem, we consider the problem of pre-computing a data structure that allows to efficiently \emph{approximate} the distance between \emph{any} two points. We call this the \emph{any-pair approximate shortest path} problem in the following. 

A commonly used method to solve the any-pair approximate shortest path problem is to select a set of $k$ {\it sources}, solve the SSSP problem from each of these $k$ sources, and use this information to approximate pairwise geodesic distances. Given a pair $p$ and $q$ of vertices, their shortest distance is approximated as the minimum, over all the $k$ sources, of the sum of the distances of $p$ and $q$ to a source $s_i$. This method is used to approximate the intrinsic geometry of shapes~\cite{Bronstein2006,Elad2003,Memoli2004}. A natural problem is thus to compute an optimal placement of $k$ sources that minimizes the stretch factor. We refer to this problem as the  {\it $k$-center path-dilation problem} and show that this problem is NP-complete (see Theorem~\ref{thm:hardness}).

A commonly used heuristic for selecting a set  of $k$ sources is to use {\it Farthest Point Sampling} (FPS)~\cite{gonzales_85, moenning_dodgson_03_mesh_sampling}, which starts from a random vertex and iteratively adds to $S$ a vertex that has the largest geodesic distance to its closest already picked source, until $k$ sources are picked. Given $k$ sources on a graph $G$, the distance between any two vertices $p$ and $q$ is approximated as the minimum over all $k$ sources, of the distance from $p$ to $q$ through one of the  sources.

FPS has been shown to perform well compared to other heuristics for isometry-invariant shape processing in practice~\cite{Ruggeri_etal_2010}~\cite[Chapter 3]{wuhrer2009}, which suggests that the stretch factor obtained by a FPS is small. However, to the best of our knowledge, no theoretical results are known on the quality of the stretch factor, $\mathcal{F}_{FPS}$, obtained by a FPS of $k$ sources, compared to the minimal stretch factor, $\mathcal{F}^*$, obtained by an optimal choice of $k$ sources. In this paper, we prove that \[\mathcal{F}_{FPS}\leq 2 r_e^2 (\mathcal{F}^*+ 1) + 8 r_e + 1,\] where $r_e$ is the ratio of the lengths of the longest and the shortest edges of $G$ (see Theorem~\ref{thm:ratio}). Note that this bound  holds for any arbitrary graph.

It should further be observed that if the ratio $r_e$ is large, $\mathcal{F}_{FPS}$ can be much larger than the optimal stretch factor $\mathcal{F}^*$ but, on the other hand, $\mathcal{F}^*$ is likely to be large as well. Indeed, if at least $k+1$ edges are arbitrarily small and are not ``too close'' to each other, $\mathcal{F}^*$ can be made arbitrarily large; this can be seen by considering the pairs of vertices defined by those small edges.

After discussing related work in Section~\ref{sec:related_work}, we prove our two main results,
Theorems~\ref{thm:ratio} and \ref{thm:hardness} in Sections~\ref{sec:ratio} and \ref{sec:hardness},
respectively.


\section{Related Work}
\label{sec:related_work}

Computing geodesics on polyhedral surfaces is a well-studied problem for which we refer to the recent survey by Bose et al.~\cite{Bose2011}. In this paper, we restrict geodesics to be shortest paths along edges of the underlying graph.

The FPS algorithm has been used for a variety of isometry-invariant surface processing tasks. The algorithm was first introduced for graph clustering~\cite{gonzales_85}, and later independently developed for 2D images~\cite{eldar_etal_97_image_sampling} and extended to 3D meshes~\cite{moenning_dodgson_03_mesh_sampling}. Ben Azouz et al.~\cite{BenAzouz_etal_07} and Giard and Macq~\cite{giard_macq_09_unfold} used this sampling strategy to efficiently compute approximate geodesic distances, Elad and Kimmel~\cite{Elad2003} and M\'{e}moli and Sapiro~\cite{Memoli2004} used FPS in the context of shape recognition. Bronstein et al.~\cite{Bronstein2006} and Wuhrer et al.~\cite{Wuhrer2007} used FPS to efficiently compute point-to-point correspondences between surfaces. While it has been shown experimentally that FPS is a good heuristic for isometry-invariant surface processing tasks~\cite{BenAzouz_etal_07,giard_macq_09_unfold,Elad2003,Memoli2004,Bronstein2006,Wuhrer2007}, to the best of our knowledge, the worst-case stretch of the geodesics has not been analyzed theoretically.

The problem we study is closely related to the $k$-center problem, which aims at finding $k$ centers (or sources) $s'_i$, such that the maximum distance of any point to its closest center is minimized. With the notation defined above, the $k$-center problem aims at finding $s'_i$, such that $\max_{p} \left(\min_{i} d(p,s'_i)\right)$ is minimized. This problem is $NP$-hard and FPS gives a $2$-approximation, which means that the $k$ centers $s_i$ found using FPS have the property that $\max_{p} \left(\min_{i} d(p,s_i)\right) \leq 2 \max_{p} \left(\min_{i} d(p,s'_i)\right)$~\cite{gonzales_85}. 

In the context of isometry-invariant shape processing, we are interested in bounding the stretch induced by the approximation rather than ensuring that every point has a close-by source. A related problem that has been studied in the context of networks by K\"{o}nemann et al.~\cite{koenemann_etal_04_k-center} is the {\it edge-dilation $k$-center problem}, where every point, $p$, is assigned a source, $s_p$, and the distance between two points $p$ and $q$ is approximated by the length of the path through $p, s_p, s_q,$ and $q$. The aim is then to find a set of sources that minimizes the worst stretch, and K\"{o}nemann et al. show that this problem is $NP$-hard and propose an approximation algorithm to solve the problem.

K\"{o}nemann et al.~\cite{koenemann_etal_04_k-center} also study a modified version of the above problem, which is similar to our problem. In particular, they present an algorithm for computing $k$ sources and claim that it ensures, for our problem, a stretch factor of $\mathcal{F}_K\leq 2 \mathcal{F}^*+ 1$~\cite[Theorem 3]{koenemann_etal_04_k-center}\footnote{Theorem 3 in \cite{koenemann_etal_04_k-center} is stated in a slightly different context but with the notation of that paper, considering $\pi_v=\Pi$ for every vertex $v$, the triangle inequality yields the claimed bound.}; as before,  $\mathcal{F}^*$ denotes the minimal stretch factor for the $k$-center path-dilation problem. However, we believe that their proof has gaps.
\footnote{For a given stretch factor $\alpha\leq \mathcal{F}^*$, their algorithm iteratively includes an endpoint of the shortest edge that cannot yet be approximated with a stretch of at most $2\alpha+1$ until no such edges are left. If the solution contains at most $k$ sources, a solution with stretch $2\alpha+1\leq 2 \mathcal{F}^*+ 1$ has been found. Their algorithm then essentially does a binary search on the optimal stretch factor $\mathcal{F}^*$. However, this search is done in a continuous interval without stopping criteria. Moreover,  since it is a priori possible that for any given $\alpha<\mathcal{F}^*$, their algorithm returns strictly more than $k$ sources, and that $\mathcal{F}^*$ may not be exactly reachable by dichotomy, we believe
that the stretch factor of  $\mathcal{F}_K\leq 2 \mathcal{F}^*+ 1$ is not ensured.}
It should nonetheless be stressed that our result is independent of whether this bound on
$\mathcal{F}_K$ holds. Indeed, the relevance of our
bound of $2 r_e^2 (\mathcal{F}^*+ 1) + 8 r_e + 1$ on  $\mathcal{F}_{FPS}$ is to  give some
theoretical insight on why FPS has been used successfully in heuristics
for isometry-invariant shape processing. 


\section{Approximating Geodesics with Farthest Point Sampling}
\label{sec:ratio}  

We start this section with some definitions and notation. We consider a connected graph $G$ in which the  edges have lengths from a positive and finite interval $[\ell_{min},\ell_{max}]$, and $r_e$ denotes the ratio $\ell_{max}/\ell_{min}$. We require the graph to be connected so that the distance between any two vertices is finite. In this section, we do not require the graph to satisfy any other criteria, but observe that if it is not  planar, the running time of FPS will be $O(k(m+n\log n))$,  where $m$ is the number of edges.

Given $k$ vertices (sources) $s_1,\ldots,s_k$ in the graph, let $s_p$ denote the (or a) closest source to a vertex $p$ and let $d(p,s_i,q)$ denote the shortest path length from $p$ to $q$ through any source $s_1,\ldots,s_k$, that is $\min_i (d(p,s_i)+d(s_i,q))$. Let $s^\ast_1,\cdots,s^\ast_k$ be a choice of sources that minimizes the stretch factor $\mathcal{F}^*=\max_{p,q}d(p,s_i^*,q)/d(p,q)$. Furthermore, let $s'_1,\cdots,s'_k$ be a choice of sources that minimizes $\max_p d(p,s_p)$. In other words, the set of $s^*_i$ is an optimal solution to $k$-center path-dilation problem and the set of $s'_i$ is an optimal solution to the $k$-center problem.


In this section, we prove the following theorem.
\begin{theorem}
\label{thm:ratio}
Let $s_1,\cdots,s_k$ be a set of sources returned by the FPS algorithm on a connected graph $G$ with edge lengths of ratio at most $r_e$. Then
\[\max_{p,q}\frac{d(p,s_i,q)}{d(p,q)}\leq 2r^2_e \max_{p,q} \frac{d(p,s^\ast_i,q)}{d(p,q)} + 2r^2_e+8r_e+1.\]
\end{theorem}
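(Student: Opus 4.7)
The plan is to prove the bound by going through two auxiliary quantities: the FPS covering radius $R := \max_v d(v, s_v)$, where $s_v$ denotes the FPS source nearest to $v$, and the analogous quantity $R^* := \max_v d(v, s_v^*)$ for the optimal-stretch sources. The strategy is to first bound the stretch of any pair $(p,q)$ in terms of $R$ and $\ell_{min}$, and then bound $R$ in terms of $\mathcal{F}^* \ell_{max}$.

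For the first step, two applications of the triangle inequality to the candidate path from $p$ through $s_p$ to $q$ yield
\[
d(p, s_p, q) \leq d(p, s_p) + d(s_p, q) \leq 2\, d(p, s_p) + d(p, q) \leq 2R + d(p, q);
\]
since $d(p, q) \geq \ell_{min}$ for distinct vertices in a connected graph, dividing gives $\max_{p,q} d(p, s_i, q)/d(p, q) \leq 2R/\ell_{min} + 1$.

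For the second step, I would bound $R$ via $R^*$ in two sub-steps. First, $R \leq 2R^*$ follows from the classical ``farthest-first'' property of FPS: if one hypothetically ran one extra iteration, the new vertex $v_{k+1}$ would satisfy $d(v_{k+1}, \{s_1, \dots, s_k\}) = R$, and, by the greedy choice, all pairwise graph-distances within $\{s_1, \dots, s_k, v_{k+1}\}$ would be at least $R$. Pigeonholing these $k{+}1$ vertices into the $k$ Voronoi cells of $s_1^*, \dots, s_k^*$ places two of them, say $a$ and $b$, in a common cell with nearest optimal source $s^*$, whence $R \leq d(a, b) \leq d(a, s^*) + d(s^*, b) \leq 2R^*$. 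Second, $R^* \leq \mathcal{F}^* \ell_{max}$: for any vertex $v$ and any incident edge $(v, u)$, the defining inequality of $\mathcal{F}^*$ gives $\min_i[d(v, s_i^*) + d(s_i^*, u)] \leq \mathcal{F}^* d(v, u) \leq \mathcal{F}^* \ell_{max}$, so $d(v, s_v^*) \leq \mathcal{F}^* \ell_{max}$.

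Chaining the two steps yields an inequality of the form $\mathcal{F}_{FPS} \leq C_1 r_e \mathcal{F}^* + C_2$, which fits the qualitative form of the claimed bound. The main obstacle will be the pigeonhole step: I will have to handle degenerate cases where $v_{k+1}$ coincides with a pre-existing FPS source or where two FPS sources already lie in the same optimal Voronoi cell, and ensure the hypothetical $(k{+}1)$-st iteration is well-defined (the graph must have at least $k+1$ vertices, otherwise the stretch is trivially $1$). Tuning the argument to match the precise constants $2r_e^2 \mathcal{F}^* + 2r_e^2 + 8r_e + 1$ may require an alternative route that bounds $d(p, s_p, q)$ by passing through the FPS source nearest to the \emph{pair-specific} best optimal source, rather than through the global covering radius only, which picks up additional additive terms scaled by $r_e$ from swapping $\ell_{min}$ and $\ell_{max}$ at two independent places in the argument.
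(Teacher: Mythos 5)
Your route is genuinely different from the paper's and, at the level of ideas, sound. The paper first proves that the stretch maximum is attained on adjacent pairs (Lemma~\ref{lem:worst-ratio}), converts between stretch and covering radius in both directions (Lemma~\ref{lem:tbound}), and then passes through the optimal $k$-center solution $s'_1,\dots,s'_k$ as an intermediate: Gonzalez's $2$-approximation gives $R\leq 2\max_p d(p,s'_p)$, Lemma~\ref{lem:ubound} compares the FPS stretch to the stretch of the $s'_i$, and only then is the stretch of the $s'_i$ related to $\mathcal{F}^*$; this double conversion is exactly where the factor $r_e^2$ and the $+8r_e$ come from. You shortcut all of this: your pigeonhole step is the Gonzalez argument run directly against the Voronoi cells of $s^*_1,\dots,s^*_k$ (it works against \emph{any} $k$-set, so $R\leq 2R^*$ needs no detour through $s'$), and your local bound $R^*\leq\mathcal{F}^*\ell_{max}$ replaces both the adjacency lemma and the lower bound of Lemma~\ref{lem:tbound}. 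The degenerate cases you flag are harmless: if $R=0$ every vertex is a source and the stretch is $1$, and otherwise the $k+1$ points are distinct with pairwise distances at least $R$, so the pigeonhole applies verbatim.

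The one real problem is the final constant. Chaining your three inequalities gives $\mathcal{F}_{FPS}\leq \frac{2}{\ell_{min}}\cdot 2\,\mathcal{F}^*\ell_{max}+1=4r_e\mathcal{F}^*+1$, and this does \emph{not} imply the stated bound: at $r_e=1$ it reads $4\mathcal{F}^*+1$ against the theorem's $2\mathcal{F}^*+11$, so for $\mathcal{F}^*>5$ your bound is the weaker one, and the theorem as stated is not yet proved. The fix is a one-line sharpening of your last step: since $\min_i\bigl(d(v,s^*_i)+d(s^*_i,u)\bigr)\geq 2d(v,s^*_v)-d(v,u)$ by the triangle inequality, the defining inequality of $\mathcal{F}^*$ yields $2d(v,s^*_v)\leq(\mathcal{F}^*+1)\,d(v,u)$, hence $R^*\leq\frac{\ell_{max}}{2}(\mathcal{F}^*+1)$. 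The chain then becomes $\mathcal{F}_{FPS}\leq 2r_e\mathcal{F}^*+2r_e+1$, which is at most $2r_e^2\mathcal{F}^*+2r_e^2+8r_e+1$ for $r_e\geq 1$ and therefore establishes the theorem (indeed a strictly stronger statement). So keep your route and add that one triangle inequality; there is no need to ``tune'' constants by rerouting through pair-specific sources.
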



In order to prove this theorem, we first show a somewhat surprising property that, for any set of sources,  the stretch factor $\max_{p,q}\frac{d(p,s_i,q)}{d(p,q)}$ is realized when $p$ and $q$ are adjacent in the graph (Lemma~\ref{lem:worst-ratio}). We  use this property to bound this stretch factor in terms of $\max_p d(p,s_p)$ (Lemma~\ref{lem:tbound}).  On the other hand, we bound the stretch factor of any set of sites in terms of the stretch factor of an optimal set of  sources for the $k$-center problem (Lemma~\ref{lem:ubound}). We then combine these results to prove Theorem~\ref{thm:ratio}.

\begin{lemma}
\label{lem:worst-ratio}
For any sources $s_1,\ldots,s_k$ and any given vertex $q$ in $G$, the maximum ratio $\max_{p} \frac{d(p,s_i,q)}{d(p,q)}$ is realized for some $p$ that is adjacent to $q$ in $G$. It follows that the maximum ratio $\max_{p,q} \frac{d(p,s_i,q)}{d(p,q)}$ is realized for some $p$ and $q$ that are adjacent in $G$.
\end{lemma}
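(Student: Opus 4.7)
The plan is to fix $q$ and show that if $p$ is not adjacent to $q$, there is some other vertex $p'$, strictly closer to $q$ in the graph, whose ratio $R(p',q) := d(p',s_i,q)/d(p',q)$ is at least $R(p,q)$. Iterating (or applying this once to any intermediate vertex on a shortest path) will move $p$ to a neighbor of $q$ without decreasing the ratio, which yields the lemma. The second assertion about $\max_{p,q}$ follows by applying the first assertion to a pair $(p,q)$ achieving the outer maximum.

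The candidate for $p'$ is the next-to-last vertex on a shortest $p$-to-$q$ path: let $p=v_0,v_1,\ldots,v_m=q$ be such a path, with $m\geq 2$ since $p$ is not adjacent to $q$, and set $p'=v_{m-1}$, so that $p'$ is adjacent to $q$ and $d(p,p')>0$. Write $D_p=\min_i(d(p,s_i)+d(s_i,q))$ and $D_{p'}=\min_i(d(p',s_i)+d(s_i,q))$. The triangle inequality gives $d(p',s_i)\geq d(p,s_i)-d(p,p')$ for every source $s_i$, so adding $d(s_i,q)$ and minimizing over $i$ yields
\[
D_{p'} \;\geq\; D_p - d(p,p').
\]
Because $p'$ lies on a shortest $p$-to-$q$ path, $d(p',q)=d(p,q)-d(p,p')$.

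The key inequality is then $R(p',q)\geq R(p,q)$, i.e.
\[
\frac{D_p - d(p,p')}{d(p,q)-d(p,p')} \;\geq\; \frac{D_p}{d(p,q)}.
\]
Clearing denominators (both are positive, as $p'\neq q$), this reduces to $d(p,p')\cdot D_p \geq d(p,p')\cdot d(p,q)$, which holds because $d(p,p')>0$ and, by the triangle inequality applied through any source, $D_p\geq d(p,q)$. Thus $p'$ is adjacent to $q$ and realizes a ratio at least as large as $p$ does, proving the first statement. Applying this to a pair $(p,q)$ realizing $\max_{p,q} R(p,q)$ proves the second statement as well.

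The only subtle point is the monotonicity step $\frac{D_p-a}{d(p,q)-a}\geq \frac{D_p}{d(p,q)}$: this is what makes the argument work and relies crucially on the fact that the approximation distance $D_p$ is always at least the true distance $d(p,q)$. Once this is isolated, the rest is a direct triangle-inequality calculation along a shortest path, and I do not anticipate any other obstacles.
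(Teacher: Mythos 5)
Your proof is correct and rests on the same key observation as the paper's: since $d(p,s_i,q)\geq d(p,q)$, subtracting the same positive amount from numerator and denominator while moving along a shortest path toward $q$ cannot decrease the ratio. The only (cosmetic) difference is that you jump directly to the vertex adjacent to $q$ in a single step, whereas the paper steps from $p$ to its immediate neighbor and argues by contradiction with a minimality assumption on the number of edges; your direct version avoids that bookkeeping.
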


\begin{proof}
For the sake of contradiction, let $q$ be any fixed vertex and let $p$ be a non-adjacent vertex that realizes the maximum  $\max_{p} \frac{d(p,s_i,q)}{d(p,q)}$ and such that among all the vertices $p'$ that realize this maximum, the shortest path from $p$ to $q$ has the smallest number of edges.
 
Let $\tilde{p}$ be the immediate neighbor of $p$ along the shortest path from $p$ to $q$.  As before, $d(\tilde{p},s_j,q)$ denotes the shortest path length from $\tilde{p}$ to $q$ through any source $s_1,\ldots,s_k$ (we use here the notation $d(\tilde{p},s_j,q)$ instead of $d(\tilde{p},s_i,q)$ in order to avoid confusion with $d(p,s_i,q)$).  Let $\ell$ be the length of the edge $p\tilde{p}$ (see Figure~\ref{fig:path}).  We have $d(\tilde{p},s_j,q) \geq d(p,s_i,q) - \ell$. Dividing by $d(\tilde{p},q) = d(p,q) - \ell$ we get $$\frac{d(\tilde{p},s_j,q)}{d(\tilde{p},q)} \geq \frac{d(p,s_i,q)}{d(p,q)-\ell} - \frac{\ell}{d(\tilde{p},q)}.$$

On the other hand, by multiplying $d(p,q)=d(\tilde{p},q) +\ell$ by $\frac{d(p,s_i,q)}{d(p,q)d(\tilde{p},q)}$ we have $$\frac{d(p,s_i,q)}{d(p,q)-\ell} ~=~ \frac{d(p,s_i,q)}{d(p,q)} ~+~ \frac{\ell d(p,s_i,q)}{d(\tilde{p},q)\cdot d(p,q)},$$ and therefore $$\frac{d(\tilde{p},s_j,q)}{d(\tilde{p},q)} \geq \frac{d(p,s_i,q)}{d(p,q)} ~+~ \frac{\ell}{d(\tilde{p},q)} \cdot \left(\frac{d(p,s_i,q)}{d(p,q)}-1\right)\geq \frac{d(p,s_i,q)}{d(p,q)},$$ which contradicts our assumption. Indeed, either the inequality is strict and $\frac{d(p,s_i,q)}{d(p,q)}$ was not maximum, or  the equality holds and the shortest path from
$\tilde{p}$ to $q$ has fewer edges than the shortest path from $p$ to $q$.
\end{proof}

\begin{figure}[htb]
\begin{center}
	\includegraphics[width=2in]{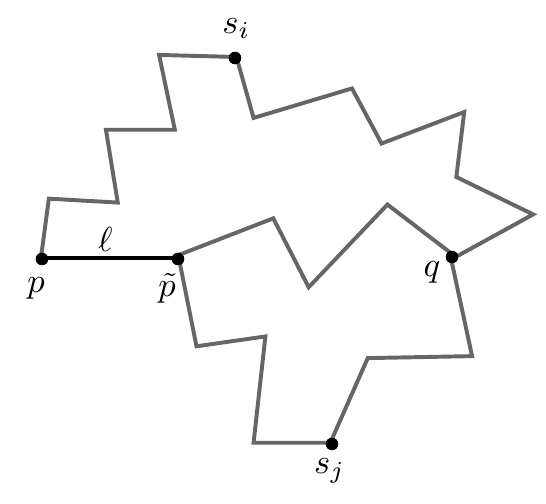}
\end{center}
\caption{For the proof of Lemma~\ref{lem:worst-ratio}.}
\label{fig:path}
\end{figure}


The property of the previous lemma that $\max_{p,q}\frac{d(p,s_i,q)}{d(p,q)}$ is realized when $p$ and $q$ are neighbors allows us to bound it as follows. 
\begin{lemma}
\label{lem:tbound}
For any sources $s_1,\ldots,s_k$, we have
$$ \frac{2}{\ell_{max}}  \max_{p}d(p,s_p)-1 \leq \max_{p,q}\frac{d(p,s_i,q)}{d(p,q)} \leq \frac{2}{\ell_{min}} \max_{p}d(p,s_p)+1.$$
\end{lemma}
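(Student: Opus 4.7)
The plan is to apply Lemma~\ref{lem:worst-ratio} to restrict attention to adjacent pairs $(p,q)$, and then obtain both inequalities by elementary triangle inequality arguments on the single edge $pq$. Throughout, for an adjacent pair we denote by $\ell_{pq}$ the length of edge $pq$, so that $\ell_{min}\leq d(p,q)=\ell_{pq}\leq \ell_{max}$.

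For the upper bound, fix an adjacent pair $(p,q)$ realizing the maximum of Lemma~\ref{lem:worst-ratio}. Using the source $s_p$ closest to $p$ as a candidate in the definition of $d(p,s_i,q)$,
\[
d(p,s_i,q)\;\leq\; d(p,s_p)+d(s_p,q)\;\leq\; d(p,s_p)+\bigl(d(p,s_p)+d(p,q)\bigr)\;=\;2\,d(p,s_p)+d(p,q),
\]
where the second inequality is the triangle inequality on $(s_p,p,q)$. Dividing by $d(p,q)\geq\ell_{min}$ and bounding $d(p,s_p)\leq\max_{p'}d(p',s_{p'})$ gives the upper bound.

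For the lower bound, let $p^\ast$ be a vertex that attains $\max_p d(p,s_p)$, and let $q$ be any neighbor of $p^\ast$ in $G$ (one exists since $G$ is connected and has at least two vertices, unless the bound is trivial). For every source $s_i$, the minimality of $s_{p^\ast}$ gives $d(p^\ast,s_i)\geq d(p^\ast,s_{p^\ast})$, and the triangle inequality gives $d(s_i,q)\geq d(p^\ast,s_i)-d(p^\ast,q)\geq d(p^\ast,s_{p^\ast})-d(p^\ast,q)$. Summing and taking the minimum over $i$,
\[
d(p^\ast,s_i,q)\;\geq\; 2\,d(p^\ast,s_{p^\ast})-d(p^\ast,q).
\]
Dividing by $d(p^\ast,q)=\ell_{p^\ast q}\leq\ell_{max}$ yields $\dfrac{d(p^\ast,s_i,q)}{d(p^\ast,q)}\geq \dfrac{2\,d(p^\ast,s_{p^\ast})}{\ell_{max}}-1$, and bounding the left-hand side by $\max_{p,q}\frac{d(p,s_i,q)}{d(p,q)}$ finishes the lower bound.

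The argument is almost entirely mechanical once Lemma~\ref{lem:worst-ratio} is in hand; the only mild care needed is in the lower bound, where the estimate $d(s_i,q)\geq d(p^\ast,s_{p^\ast})-d(p^\ast,q)$ may be negative for some $i$, but this does not affect the overall inequality since the displayed bound on $d(p^\ast,s_i,q)$ can itself be negative (in which case the conclusion is trivial, as the stretch is always $\geq 1$). One should also observe that the choice of $q$ as a neighbor of $p^\ast$ is arbitrary; in particular picking $q$ to be the next vertex of a shortest $p^\ast$–$s_{p^\ast}$ path makes the inequality $d(s_{p^\ast},q)=d(p^\ast,s_{p^\ast})-\ell_{p^\ast q}$ an equality, confirming that the bound is essentially tight.
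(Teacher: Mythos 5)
Your proof is correct, and the upper bound is identical to the paper's. The lower bound follows the same chain of triangle inequalities but is organized differently in one useful respect: the paper takes $q$ to be the vertex realizing $\max_q d(\bar p,s_i,q)/d(\bar p,q)$ and then invokes Lemma~\ref{lem:worst-ratio} to conclude that this $q$ is adjacent to $\bar p$, hence $d(\bar p,q)\leq\ell_{max}$; you instead pick $q$ to be an \emph{arbitrary} neighbor of $p^\ast$ and bound the resulting ratio by the global maximum, which gives the same conclusion without using Lemma~\ref{lem:worst-ratio} at all. (Your invocation of that lemma in the upper bound is likewise superfluous, since $d(p,q)\geq\ell_{min}$ holds for every distinct pair.) So your argument is marginally more self-contained for this particular lemma, at the cost of nothing. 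One small imprecision: for adjacent $p,q$ you write $d(p,q)=\ell_{pq}$, but the shortest path between two adjacent vertices need not use the edge joining them; only $d(p,q)\leq\ell_{pq}\leq\ell_{max}$ is guaranteed, and fortunately that one-sided inequality is all your argument (and the paper's) actually uses.
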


\begin{proof}
For the upper bound, we have $d(p,s_i,q) \leq d(p,s_p)+d(s_p,q)\leq 2d(p,s_p)+d(p,q)$. Therefore, $\frac{d(p,s_i,q)}{d(p,q)} \leq \frac{2}{d(p,q)} d(p,s_p)+1.$ This holds for any vertices $p$ and $q$ and thus for those that realize the maximum of $\frac{d(p,s_i,q)}{d(p,q)}$. Furthermore, $d(p,q)\geq\ell_{min}$  and $d(p,s_p)\leq \max_p d(p,s_p)$. Hence, \[ \max_{p,q}\frac{d(p,s_i,q)}{d(p,q)} \leq \frac{2}{\ell_{min}} \max_{p}d(p,s_p)+1.\]

For the  the lower bound, we have by the triangle inequality that, for any $i$, $d(q,s_i) \geq d(p,s_i) - d(p,q)$. Adding $d(p,s_i)$ on both sides, we get $d(p,s_i)+d(q,s_i) \geq 2d(p,s_i) - d(p,q)$. By the definition of $s_p$, $d(p,s_i) \geq d(p,s_p)$ for any $i$, thus $d(p,s_i)+d(q,s_i) \geq 2d(p,s_p) - d(p,q)$. This holds for any $i$ and thus for the $i$ such that $d(p,s_i)+d(q,s_i)$ is minimum, hence $d(p,s_i,q) \geq 2d(p,s_p) - d(p,q)$. Dividing by $d(p,q)$, we get $\frac{d(p,s_i,q)}{d(p,q)}\geq \frac{2}{d(p,q)} d(p,s_p) - 1.$ This holds for any $p$ and $q$ and thus for the vertex $p$ that realizes the maximum of $d(p,s_p)$; let $\bar{p}$ denote such vertex. We then have that $\frac{d(\bar{p},s_i,q)}{d(\bar{p},q)}\geq \frac{2}{d(\bar{p},q)}  \max_{p}d(p,s_p)-1$.
This holds for any $q$ and in particular for the one that realizes $\max_q\frac{d(\bar{p},s_i,q)}{d(\bar{p},q)}$. By Lemma~\ref{lem:worst-ratio}, the maximum is realized for a $q$ that is adjacent to $\bar{p}$ in $G$, thus, for such a $q$, $\frac{2}{d(\bar{p},q)}\geq \frac{2}{\ell_{max}}$. It follows that 
\[\max_{p,q}\frac{d(p,s_i,q)}{d(p,q)} \geq \max_q\frac{d(\bar{p},s_i,q)}{d(\bar{p},q)}\geq \frac{2}{\ell_{max}}  \max_{p}d(p,s_p)-1.\]
\end{proof}


The following lemma bounds the path length between two vertices $u$ and $v$ passing through $s_u$ in terms of the shortest path between $u$ and $v$ through any source.

\begin{lemma}\label{lem:ineq1}
For any sources $s_1,\ldots,s_k$, and vertices $u, v$ we have $$d(u,s_u) + d(s_u,v)\leq d(u,s_i,v)+2d(u,v).$$
\end{lemma}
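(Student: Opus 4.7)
The plan is to reduce the inequality to two applications of the triangle inequality plus the defining property of $s_u$. Let $s_j$ denote a source that realizes the minimum, so that $d(u,s_i,v)=d(u,s_j)+d(s_j,v)$; then the right-hand side becomes $d(u,s_j)+d(s_j,v)+2d(u,v)$, and it suffices to bound $d(u,s_u)+d(s_u,v)$ in terms of these quantities.

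First, I would exploit the definition of $s_u$ as the closest source to $u$, which gives $d(u,s_u)\leq d(u,s_j)$. Then, by the triangle inequality applied to the path $s_u\to u\to v$, I get $d(s_u,v)\leq d(s_u,u)+d(u,v)\leq d(u,s_j)+d(u,v)$. Adding these two bounds yields
\[
d(u,s_u)+d(s_u,v)\leq 2d(u,s_j)+d(u,v).
\]

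To finish, I would use the triangle inequality one more time, now on $u,s_j,v$, to write $d(u,s_j)\leq d(s_j,v)+d(u,v)$. Substituting into the previous display replaces one of the two copies of $d(u,s_j)$ and produces
\[
d(u,s_u)+d(s_u,v)\leq d(u,s_j)+d(s_j,v)+2d(u,v)=d(u,s_i,v)+2d(u,v),
\]
which is exactly the claim. There is no real obstacle here: the only subtlety is noticing that the factor of $2$ in $2d(u,v)$ comes from using the triangle inequality twice, once to bound $d(s_u,v)$ in terms of $d(u,s_j)$ and once to trade a copy of $d(u,s_j)$ for $d(s_j,v)$.
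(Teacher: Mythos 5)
Your proof is correct and follows essentially the same route as the paper's: both use $d(u,s_u)\leq d(u,s_j)$ for the closest source, then two applications of the triangle inequality to bound $d(s_u,v)$ by $d(s_j,v)+2d(u,v)$. The only difference is bookkeeping — you split the chain into two displayed bounds and substitute, while the paper writes it as a single chain of inequalities.
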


\begin{proof}
Denote by $s_i$ the source that realizes the minimum $d(u,s_i,v)=\min_i (d(u,s_i)+d(s_i,v))$. Since by definition $d(u,s_u)\leq d(u,s_i)$, we only have to show that $d(v,s_u)\leq d(s_i,v) +2d(u,v)$. Using the triangle inequality twice, we have \[d(v,s_u)\leq d(v,u)+d(u,s_u)\leq d(v,u)+d(u,s_i)\leq d(v,u)+d(u,v)+d(v,s_i),\] which concludes the proof.
\end{proof}


These results allow us to bound the stretch factor corresponding to the sources returned by the FPS algorithm with respect to the stretch factor corresponding to an optimal choice of sources for the $k$-center problem. 

\begin{lemma}
\label{lem:ubound}
Let $s_1,\cdots,s_k$ be a set of sources returned by the FPS algorithm and $s'_1,\cdots,s'_k$ be an optimal set of sources for the $k$-center problem. Then \[\max_{p,q} \frac{d(p,s_i,q)}{d(p,q)}\leq 2r_e\max_{u,v} \frac{d(u,s'_i,v)}{d(u,v)}+6r_e+1.\]
\end{lemma}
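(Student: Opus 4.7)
The plan is to bridge the two stretch factors by going through the $k$-center value $\max_p d(p,s_p)$ and exploiting the classical fact, recalled in Section~\ref{sec:related_work}, that farthest point sampling is a $2$-approximation for the standard $k$-center problem: $\max_p d(p,s_p)\leq 2\max_p d(p,s'_p)$. This splits the desired inequality into two halves: one converts a stretch factor into a $k$-center value (on the FPS side, for the sources $s_i$), and the other does the reverse conversion (on the optimal side, for the sources $s'_i$).

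First I would apply the upper-bound half of Lemma~\ref{lem:tbound} to the FPS sources to obtain
\[\max_{p,q}\frac{d(p,s_i,q)}{d(p,q)}\leq \frac{2}{\ell_{min}}\max_p d(p,s_p)+1,\]
and then invoke the FPS $2$-approximation to replace $\max_p d(p,s_p)$ by $2\max_p d(p,s'_p)$. It then remains to bound $\max_p d(p,s'_p)$ in terms of $F^*:=\max_{u,v}\frac{d(u,s'_i,v)}{d(u,v)}$. For this, I would let $\bar p$ realize $\max_p d(p,s'_p)$, pick any neighbor $v$ of $\bar p$ in $G$ (which exists since $G$ is connected with $|V|\geq 2$), and apply Lemma~\ref{lem:ineq1} to the source set $s'_1,\dots,s'_k$ at $u=\bar p$ to obtain $d(\bar p,s'_{\bar p})+d(s'_{\bar p},v)\leq d(\bar p,s'_i,v)+2d(\bar p,v)$. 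Rather than merely discarding the non-negative term $d(s'_{\bar p},v)$, I would sharpen it via the triangle inequality $d(s'_{\bar p},v)\geq d(\bar p,s'_{\bar p})-d(\bar p,v)$; this gives $2d(\bar p,s'_{\bar p})\leq d(\bar p,s'_i,v)+3d(\bar p,v)$ in the sub-case $d(\bar p,s'_{\bar p})\geq d(\bar p,v)$, while the opposite sub-case yields $d(\bar p,s'_{\bar p})<\ell_{max}$ directly. Combining these and using $d(\bar p,v)\leq \ell_{max}$ together with $d(\bar p,s'_i,v)\leq F^*\cdot d(\bar p,v)$ then produces $\max_p d(p,s'_p)\leq \frac{\ell_{max}}{2}(F^*+3)$.

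Chaining the three inequalities above and collecting constants via $r_e=\ell_{max}/\ell_{min}$ produces the claimed $2r_e(F^*+3)+1=2r_e F^*+6r_e+1$. The step I expect to be the main obstacle is precisely the intermediate bound $\max_p d(p,s'_p)\leq \frac{\ell_{max}}{2}(F^*+3)$: a naive application of Lemma~\ref{lem:ineq1} that simply drops $d(s'_{\bar p},v)$ would only give $\ell_{max}(F^*+2)$, which eventually doubles the leading coefficient of the final bound from $2r_e$ to $4r_e$; saving that factor of two via the extra triangle-inequality step, with its attendant case split, is what delivers the advertised leading constant $2r_e$ in the statement, with the $+3$ inflating to the $+6r_e$ seen in the additive term.
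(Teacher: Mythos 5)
Your proof is correct and follows essentially the same route as the paper's: both chain the FPS stretch factor to $\max_p d(p,s_p)$, invoke the Gonzalez $2$-approximation, and then convert $\max_p d(p,s'_p)$ back to the optimal stretch via Lemma~\ref{lem:ineq1} combined with one extra triangle inequality, arriving at the identical key bound $2d(u,s'_u)\leq d(u,s'_i,v)+3d(u,v)$. The only (harmless) differences are that you cite Lemma~\ref{lem:tbound} instead of re-deriving its upper bound inline, you take an arbitrary neighbor $v$ rather than the maximizing one (so you do not need Lemma~\ref{lem:worst-ratio} here), and your case split is superfluous since the reverse triangle inequality holds unconditionally.
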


\begin{proof}
Since $s_1,\ldots,s_k$ is  a set of sources returned by the FPS algorithm, this choice  of sources provides a 2-approximation for the $k$-center problem compared to an optimal solution $s'_1,\ldots,s'_k$; in other words, $\max_p d(p,s_p)\leq 2 \max_p d(p,s'_p)$~\cite{gonzales_85}.

By definition, $d(p,s_i,q)$ is the minimum over all (fixed) sources $s_i$ of $d(p,s_i)+d(s_i,q)$. Thus, $d(p,s_i,q)\leq d(p,s_p)+d(s_p,q)$. Moreover, by the triangle inequality, $d(s_p,q)\leq  d(s_p,p)+d(p,q)$ thus $d(p,s_i,q)\leq 2d(p,s_p)+d(p,q)$.  One the other hand, $d(p,s_p)\leq \max_u d(u,s_u)$, which is less than or equal to $2\max_u d(u,s'_u)$ by the 2-approximation property. For clarity, denote by $u$ the vertex that realizes the maximum $\max_u d(u,s'_u)$. We then have $d(p,s_i,q)\leq 4d(u,s'_u)+d(p,q)$. 

Now, by the triangle inequality, $d(u,s'_u)\leq d(u,v)+d(v,s'_u)$ for any vertex $v$. Thus $2d(u,s'_u)\leq d(u,v)+d(v,s'_u) +d(u,s'_u)$ which implies, by Lemma~\ref{lem:ineq1}, that $2d(u,s'_u)\leq 3d(u,v)+d(u,s'_i,v)$. Thus, $d(p,s_i,q)\leq 2d(u,s'_i,v)+6d(u,v)+d(p,q)$ and \[\frac{d(p,s_i,q)}{d(p,q)}\leq 2\frac{d(u,v)}{d(p,q)}\frac{d(u,s'_i,v)}{d(u,v)}+6\frac{d(u,v)}{d(p,q)}+1.\]

This inequality holds for any distinct $p$ and $q$,  and any $v$ distinct from $u$ (recall that $u$ is fixed). Thus it holds for the vertices $p$ and $q$ that realize  $\max_{p,q}\frac{d(p,s_i,q)}{d(p,q)}$ and for the $v$ that realizes $\max_{v} \frac{d(u,s'_i,v)}{d(u,v)}$. Such a $v$ is a neighbor of $u$
by Lemma~\ref{lem:worst-ratio}, thus it satisfies $d(u,v)\leq \ell_{max}$. Since $d(p,q)\geq \ell_{min}$ for any distinct $p$ and $q$, and $\max_{v} \frac{d(u,s'_i,v)}{d(u,v)}\leq \max_{u,v} \frac{d(u,s'_i,v)}{d(u,v)}$, we get \[\max_{p,q} \frac{d(p,s_i,q)}{d(p,q)}\leq 2\frac{\ell_{max}}{\ell_{min}}\max_{u,v} \frac{d(u,s'_i,v)}{d(u,v)}+6\frac{\ell_{max}}{\ell_{min}}+1.\]
\end{proof}


This finally allows us to prove the main theorem.

\begin{proof}[Proof of Theorem~\ref{thm:ratio}]
By Lemma~\ref{lem:ubound} and using the same notation, we have \[\max_{p,q} \frac{d(p,s_i,q)}{d(p,q)}\leq 2r_e\max_{u,v} \frac{d(u,s'_i,v)}{d(u,v)}+6r_e+1.\]
Using the upper bound in Lemma~\ref{lem:tbound} on $\max_{u,v}\frac{d(u,s'_i,v)}{d(u,v)}$, we have \
$$\max_{p,q}\frac{d(p,s_i,q)}{d(p,q)} ~\leq~ 2r_e\left(\frac{2}{\ell_{min}}\max_p d(p,s'_p)+1\right)+6r_e+1.$$

By definition, $s'_1,\ldots,s'_k$ is an optimal set of sources for the $k$-center problem, that is $\arg\!\min_{s_1,\ldots,s_k} \max_p d(p,s_p)$ and thus $
\min_{s_1,\ldots,s_k} \max_p d(p,s_p)=\max_p d(p,s'_p) \leq \max_p d(p,s^*_p) $.

We now apply the lower bound of Lemma~\ref{lem:tbound} to $s^*_1,\ldots,s^*_k$ which gives \[\frac{2}{\ell_{max}}  \max_{p}d(p,s^*_p)-1 \leq \max_{p,q}\frac{d(p,s^*_i,q)}{d(p,q)}\] and  thus 
\[\begin{split}
\max_{p,q}\frac{d(p,s_i,q)}{d(p,q)}&\leq 2r_e\left(\frac{\ell_{max}}{\ell_{min}}(\max_{p,q} \frac{d(p,s^\ast_i,q)}{d(p,q)}+1)+1\right)+6r_e+1\\
&\leq 2r^2_e \max_{p,q} \frac{d(p,s^\ast_i,q)}{d(p,q)} + 2r^2_e+8r_e+1.
\end{split}\]
\end{proof}


\section{The Complexity of $k$-Center Path-Dilation Problem}
\label{sec:hardness}

In this section we consider the complexity of the $k$-center path-dilation problem on triangle graphs, i.e., computing an optimal set of sources that minimizes the stretch factor. The following theorem shows that the decision version of this problem is NP-complete for triangle graphs. Note that this directly yields the NP-completeness for arbitrary graphs (since proving that the problem is in NP is trivial).

\begin{theorem}
\label{thm:hardness}
Given a triangle graph $G=(V,E,F)$, an integer $k$, and a real value $\xi$, it is NP-complete to determine whether there exists  a set $S=\{s_1,\ldots,s_k\}$ of $k$ sources such that the stretch factor $\max_{p,q}\min_i\frac{d(p,s_i)+d(s_i,q)}{d(p,q)}$ is at most $\xi$.
\end{theorem}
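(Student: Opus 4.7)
The NP membership is routine: given a candidate set $S$ of $k$ sources, compute all-pairs shortest-path distances by running Dijkstra from every vertex of $G$ and check in polynomial time that, for every pair $(p,q)$, some $s_i\in S$ satisfies $d(p,s_i)+d(s_i,q)\leq \xi\,d(p,q)$. My plan is therefore to establish NP-hardness by a polynomial reduction from a standard NP-hard covering problem on planar graphs of bounded degree. A natural candidate is \emph{planar dominating set} (equivalently one could use \emph{planar vertex cover}), both known to be NP-complete.

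Given an instance $(H,k)$ of planar dominating set, I would construct a triangle graph $G$ as follows. Embed $H$ in the plane, keeping $V(H)$ as vertices of $G$, and attach to every $v\in V(H)$ a small triangulated \emph{probe gadget} $\Gamma_v$ connected to $v$ by one or two short edges. Each $\Gamma_v$ contains a distinguished pair of adjacent vertices $(p_v,q_v)$ whose only cheap alternative to the direct $p_vq_v$-edge passes through $v$; the gadget is calibrated so that the ratio $\frac{d(p_v,s_i)+d(s_i,q_v)}{d(p_v,q_v)}$ exceeds $\xi$ unless a source is placed at $v$ or at a neighbor of $v$ in $H$. One then triangulates every remaining face by auxiliary edges whose lengths (a) respect a bounded ratio $r_e$ and (b) are large enough that these edges never lie on any shortest path and never shorten a probe distance. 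Finally one sets $k'=k$ plus a fixed constant accounting for the (uniform) number of sources internally required inside each gadget, and $\xi$ slightly above $1$ as dictated by the gadget.

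The main obstacle I anticipate is the interaction between the triangulation requirement and Lemma~\ref{lem:worst-ratio}: since the worst stretch is realized at adjacent vertex pairs, every triangulating edge introduces a potential constraint that is not naturally tied to $H$. I would handle this by choosing the triangulating edges so that their endpoints already enjoy a short alternative path inside existing gadgets, making the induced stretch constraints strictly weaker than the gadget constraints, and by keeping probe edges strictly the shortest in $G$ so that no triangulating edge becomes the bottleneck. A secondary obstacle is preventing a ``cheat'' solution that places a source inside some $\Gamma_v$ rather than at $v$ or a neighbor of $v$; this is handled by packing several probe pairs into $\Gamma_v$ at positions chosen so that no single internal source can simultaneously satisfy all of them, while an external source at $v$ or a neighbor does.

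Correctness then follows in both directions. If $D\subseteq V(H)$ is a dominating set of size $k$, placing sources at $D$ (together with the fixed gadget-internal sources) realizes stretch $\leq\xi$ on every adjacent pair, and by Lemma~\ref{lem:worst-ratio} this suffices. Conversely, given any set of $k'$ sources in $G$ achieving stretch $\leq\xi$, one projects each ``external'' source to the closest vertex of $V(H)$ without worsening the stretch at any probe pair, yielding a set of at most $k$ vertices of $H$ that dominates every $v\in V(H)$ via its probe gadget. Since the construction is clearly of polynomial size in $|V(H)|$, this completes the reduction and proves NP-completeness.
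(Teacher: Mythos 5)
Your overall strategy (NP membership by direct checking, hardness via gadgets that force sources onto a planar covering structure, and using Lemma~\ref{lem:worst-ratio} to restrict attention to adjacent pairs) matches the paper's in spirit, but the reduction you sketch has gaps that are not merely technical. The paper reduces from vertex cover on planar graphs of maximum degree three, subdivides each edge of a grid embedding so that every resulting edge has length $1$ or $1/2$ (proving a separate lemma, Lemma~\ref{lem:vc}, to control how the cover size changes under subdivision), and replaces each \emph{edge} by an explicit gadget $\varrho$ with vertices $u,a,b,v$ calibrated so that $\frac{|au|+|bu|}{|ab|}=\xi\geq 3$; the pair $(a,b)$ then forces a source into every gadget, and ``gadget contains a source'' translates directly into ``edge is covered.'' Your reduction instead attaches a probe gadget to each \emph{vertex} and asks that its probe pair be satisfiable exactly by a source at $v$ or at a neighbor of $v$ in $H$. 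This requires the embedded distances from the probe pair to every vertex of $H$ to separate neighbors from non-neighbors by a uniform threshold, i.e., essentially an isometric embedding of $H$, which a generic planar or grid embedding does not provide and which you do not construct. This is precisely the difficulty that the paper's edge-based gadget sidesteps, since its condition is purely local to one edge.

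Second, your choice ``$\xi$ slightly above $1$'' is incompatible with your own mechanism. For a probe pair $(p_v,q_v)$ joined by an edge of length $\ell$ and a source $s$ at distance $D$ from $p_v$, the stretch through $s$ is at least $2D/\ell-1$ (cf.\ the lower bound in Lemma~\ref{lem:tbound}), so stretch close to $1$ forces $D\ll\ell$: no source at $v$, let alone at a neighbor of $v$, could serve the probe pair. The paper takes $\xi\geq 3$ and still must verify, by an explicit three-case analysis (same gadget, adjacent gadgets, distant gadgets), that \emph{every} pair of the final graph meets the bound in the forward direction; your sketch does not address the non-probe pairs at all. Finally, your converse step --- ``project each external source to the closest vertex of $V(H)$ without worsening the stretch at any probe pair'' --- is asserted but not justified; relocating a source changes its distances to all probe pairs simultaneously, and the paper's analogue (mapping a source at $a$ or $b$ of a gadget to an endpoint of the corresponding edge of $G_r$) works only because the gadget guarantees that such a source serves nothing beyond its own and adjacent gadgets. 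As it stands, the proposal is a plausible plan rather than a proof.
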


Note that the  problem is in NP since, for any set of $k$ sources, the stretch factor can be computed in polynomial time. To show the hardness, we provide a reduction from the decision problem related to finding a minimum cardinality vertex cover on planar graphs of maximum vertex degree three~\cite{GJ77}. The first step of the reduction uses the following well-known result on embedding planar graphs in integer grids~\cite{Valiant81}.

\begin{lemma}
\label{lem:embed}
A planar graph $G = (V, E)$ with maximum degree $4$ can be embedded in the plane using $O(|V|^2)$ area in such a way that its nodes have integer coordinates and its edges are drawn as polygonal line segments that lie on the integer grid (i.e, every edge  consists of one or more line segments that lie on lines of the form  $x=i$ or $y=j$, where $i$ and $j$ are integers).
\end{lemma}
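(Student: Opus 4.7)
The plan is to use a standard two-phase construction from graph drawing theory: first produce a visibility representation of $G$, and then compact it into an orthogonal grid embedding with polygonal edges.

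For the first phase, I would fix a combinatorial planar embedding of $G$ and, if needed, augment $G$ with dummy edges to make it biconnected while preserving planarity and the maximum-degree-$4$ constraint; this is feasible because a planar graph of maximum degree $4$ has at most $2|V|-4$ edges, leaving room for the augmentation. I would then compute an \emph{st-numbering} of the biconnected planar graph, labeling the vertices $1,\dots,|V|$ so that $1$ and $|V|$ lie on the outer face and every other vertex has both a lower- and a higher-numbered neighbor, in an order compatible with the fixed planar embedding. From this, build the classical Rosenstiehl--Tarjan visibility representation: each vertex $v$ becomes a horizontal segment at integer height equal to its st-number, and each edge becomes a vertical segment with integer x-coordinate connecting its two endpoint segments. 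Standard arguments show this drawing fits in an $O(|V|)\times O(|V|)$ grid.

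For the second phase, I would scale the representation by a factor of two so that between any two consecutive vertex rows there is an empty grid row available for horizontal routing. Because every vertex has degree at most $4$ and the st-numbering ensures each non-extreme vertex has at most two downward and two upward incident edges, I can collapse the horizontal segment of each vertex $v$ to a single grid point $(x(v),y(v))$, where $x(v)$ is chosen to coincide with the x-coordinate of one of its incident vertical edges. Each edge $(u,v)$ with $y(u)<y(v)$, originally a vertical segment at column $x_e$, is rerouted as a polygonal chain: horizontally from $(x(u),y(u))$ along the empty row just above $u$ to column $x_e$, then vertically upward to the empty row just below $v$, then horizontally to $(x(v),y(v))$. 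After removing the dummy augmentation edges, the resulting drawing has area $O(|V|^2)$ and every bend lies on the integer grid.

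The main obstacle is to certify that, after the compaction, no two edges cross at interior points. The argument rests on three facts: the cyclic order of edges around each vertex induced by the planar embedding is preserved by the visibility representation and the st-numbering, so the left-to-right order of the horizontal stubs at each vertex is inherited from the planar embedding; the vertical portions retain pairwise disjoint open x-intervals from the visibility representation; and the empty routing rows used for horizontal traversal are disjoint for distinct pairs of consecutive vertex heights. A short case analysis at each vertex, distinguishing the at most four incident edges by their up/down direction and by their position relative to $x(v)$, then establishes planarity and completes the proof.
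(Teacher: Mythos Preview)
The paper does not prove this lemma at all: it is quoted as a well-known result and attributed to Valiant~\cite{Valiant81}, with no argument given. Your proposal therefore goes well beyond what the paper does, supplying an actual proof sketch along the lines of later orthogonal-drawing constructions (visibility representations followed by compaction, in the spirit of Rosenstiehl--Tarjan and Tamassia/Biedl--Kant).

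That said, two steps in your sketch do not hold as stated. First, the biconnectivity augmentation ``while preserving the maximum-degree-$4$ constraint'' is not guaranteed by the edge-count argument you give: having at most $2|V|$ edges does not imply that one can add edges to kill every cut vertex without raising some vertex above degree~$4$ (consider a cut vertex that already has degree~$4$, or components that are $4$-regular). Second, the assertion that an st-numbering forces every non-extreme degree-$4$ vertex to have at most two lower-numbered and two higher-numbered neighbors is false; an st-numbering only guarantees at least one of each, so a vertex may well have a $3$--$1$ split, which breaks your routing scheme at that vertex. Both issues are repairable---standard orthogonal-drawing algorithms handle them via canonical orderings or more careful port assignment---but as written the compaction phase has genuine gaps. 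Since the paper only needs the statement as a black box, citing Valiant (or one of the later linear-area orthogonal-drawing papers) is the appropriate move here.
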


\begin{figure}[t]
\begin{center}
	\includegraphics[width=2in]{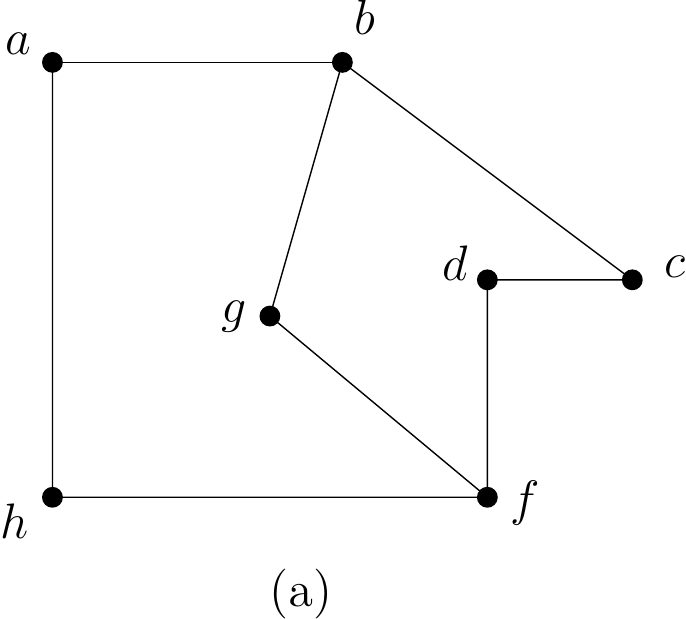}	\hspace{0.5cm}
	\includegraphics[width=2.2in,page=2]{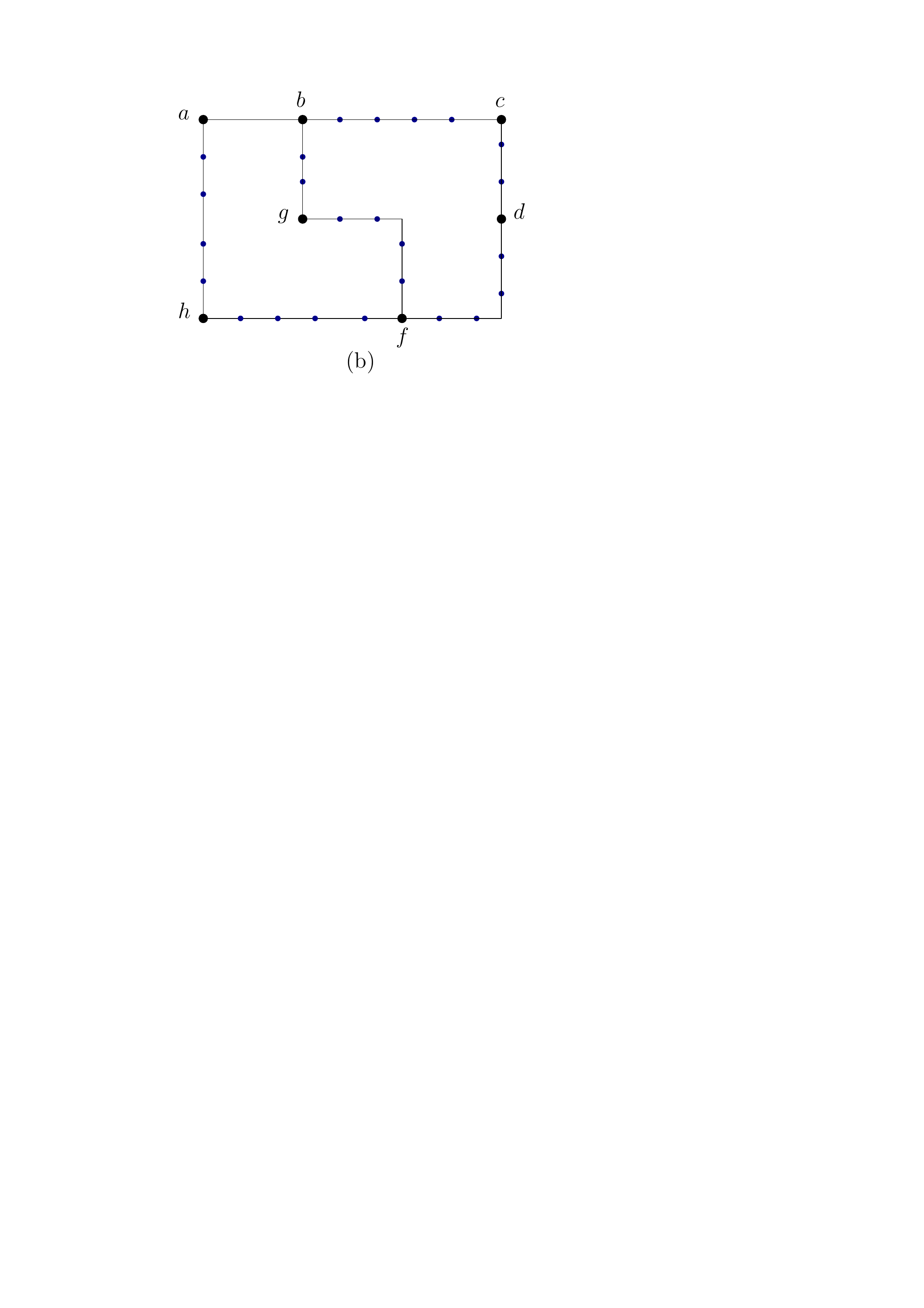}\\
\vspace{0.6cm}
	\includegraphics[width=2in]{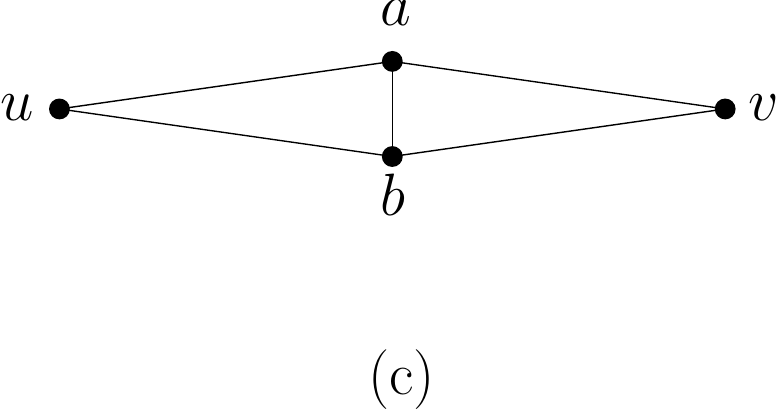}	\hspace{0.5cm}
	\includegraphics[width=2.2in,page=2]{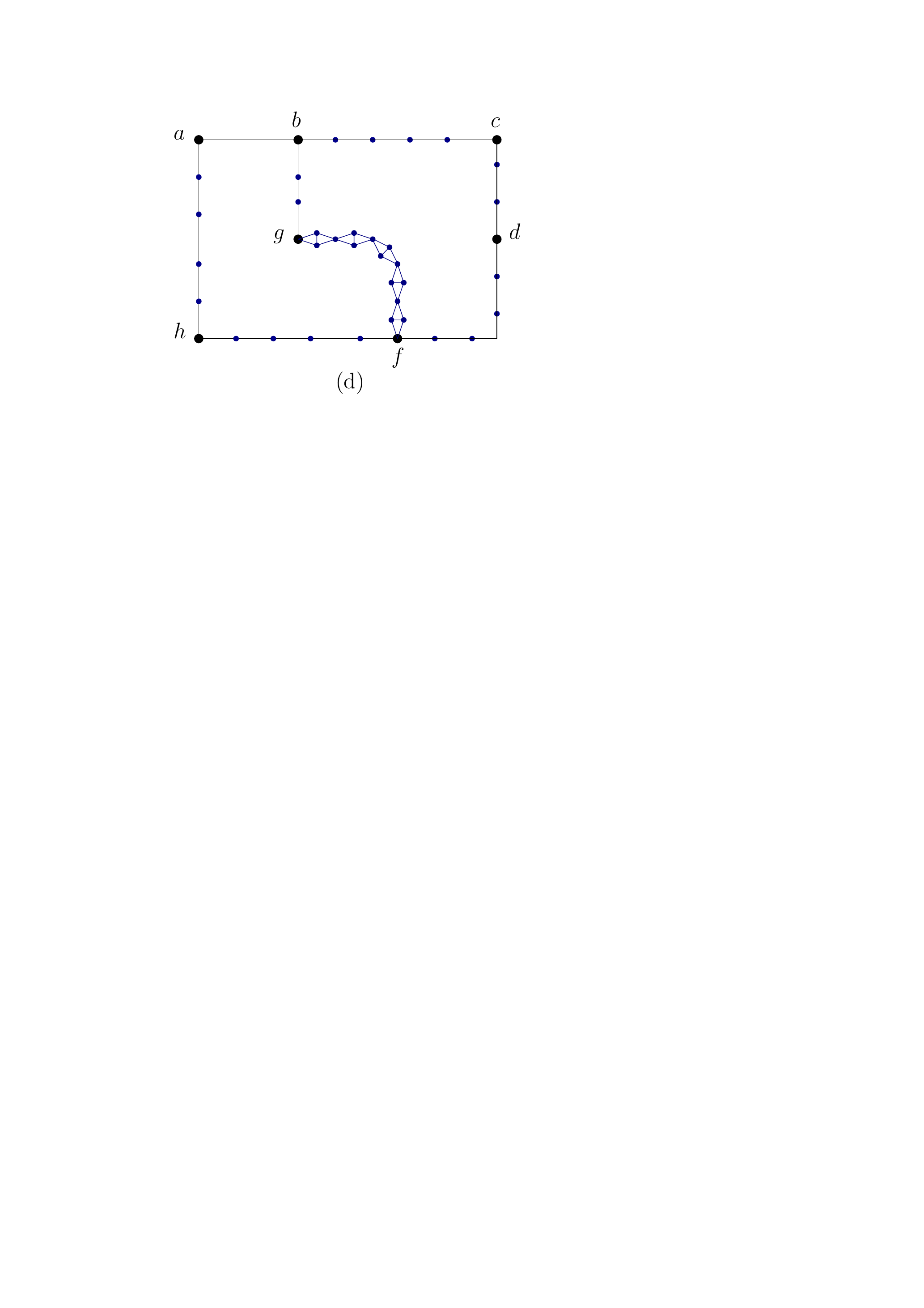}
\end{center}
\caption{(a) A planar graph $G$ and (b) the grid-embedding of $G_r$. The gadget $\varrho$ replacing the edges of $G_r$ and (d) the resulting graph $G'$.}
\label{fig:embed}
\end{figure}

Consider a planar graph $G$ with maximum degree 3, and let $G_r$ be a planar embedding of $G$ according to Lemma~\ref{lem:embed}, to which we have added, on each edge $e\in E$, an {\it even} number of $2k_e$  {\it auxiliary} nodes  with half-integer coordinates and such that every resulting edge in $G_r$ has length  $1$ or $1/2$. (We consider the half-integer grid so that we can ensure that we add an {\it even} number of auxiliary nodes on every edge of $G$ so that every resulting edge in $G_r$ has length at most 1.) Please refer to Figure~\ref{fig:embed}(a,b) for an illustration. For an edge $uv\in E$, we let
$\pat(u,v)$ denote the path in $G_r$ replacing the edge $uv$. The endpoints of the paths (i.e., the nodes that are not auxiliary), are called {\it regular} nodes. Finally, let $m=\sum_{e\in E} k_e$. We have the following lemma.

\begin{lemma}
\label{lem:vc}
$G$ has a vertex cover of size $k$ if and only if $G_r$ has a vertex cover of size $k+m$.
\end{lemma}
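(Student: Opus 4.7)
The plan is to relate, for each edge $e=uv$ of $G$, a vertex cover of $G_r$ to how it intersects the path $\pat(u,v)$. The key combinatorial input is a minimum vertex cover calculation on a subdivided edge: $\pat(u,v)$ has $2k_e+1$ edges, and an elementary alternating argument shows that if at least one of the regular endpoints $u,v$ is selected then the remaining edges of $\pat(u,v)$ can be covered using exactly $k_e$ auxiliary nodes, while if neither $u$ nor $v$ is selected then $k_e+1$ auxiliary nodes are both necessary and sufficient.

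For the ``only if'' direction, given a vertex cover $C$ of $G$ with $|C|\leq k$, I would take $C_r$ to consist of $C$ together with $k_e$ suitably chosen auxiliary nodes on each $\pat(u,v)$, alternating along the path starting from whichever endpoint of $uv$ lies in $C$. Because $C$ covers every edge $uv$, at least one such endpoint always exists, and the resulting $C_r$ is a vertex cover of $G_r$ of size $|C|+\sum_{e\in E} k_e\leq k+m$.

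For the ``if'' direction, given a vertex cover $C_r$ of $G_r$ with $|C_r|\leq k+m$, I would set $r=|C_r\cap V|$, let $A_e$ denote the number of auxiliary vertices of $\pat(u,v)$ contained in $C_r$, and let $a$ be the number of edges $uv\in E$ with $\{u,v\}\cap C_r=\emptyset$. The path bound above gives $A_e\geq k_e$ always and $A_e\geq k_e+1$ for each of the $a$ uncovered edges, whence $k+m\geq |C_r|=r+\sum_{e\in E} A_e\geq r+m+a$, so that $r+a\leq k$. Adding one arbitrary endpoint of each of the $a$ uncovered edges to $C_r\cap V$ then yields a vertex cover of $G$ of size at most $r+a\leq k$.

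The main obstacle will be making the decomposition $|C_r|=r+\sum_{e\in E} A_e$ airtight; this rests on the fact that distinct paths $\pat(u,v)$ and $\pat(u',v')$ share only regular vertices, so the auxiliary nodes of different paths form disjoint contributions to $C_r\setminus V$. Once that observation is in place, the remainder is a short case analysis on how many endpoints of a given edge $uv$ lie in $C_r$, combined with the elementary identity $\lceil n/2 \rceil$ for the minimum vertex cover of a path with $n$ edges.
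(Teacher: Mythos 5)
Your proof is correct, and the forward direction is identical to the paper's (add $k_e$ alternating auxiliary nodes per subdivided edge, anchored at a covered endpoint). The backward direction differs in organization: the paper runs an exchange argument, repeatedly swapping a surplus auxiliary node on an ``uncovered'' path for one of its regular endpoints until every path has a covered endpoint, and then observes that at most $k$ regular nodes can remain because each path still consumes at least $k_e$ auxiliary nodes. You instead prove the same two path-covering facts ($A_e\geq k_e$ always, $A_e\geq k_e+1$ when neither endpoint is selected) and feed them into a single counting inequality $k+m\geq r+m+a$, then repair the $a$ uncovered edges at the end. The two arguments rest on exactly the same combinatorial input, but your version avoids the paper's slightly informal ``continuing this way'' step and makes the bookkeeping explicit via the disjointness of auxiliary nodes across distinct paths, which is indeed the only point that needs care. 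Both are valid; yours is marginally tighter as written.
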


\begin{proof}
Any vertex cover $C$ of $G$ with size $k$ can be extended to a vertex cover of size $k+m$ in $G_r$ by including every other auxiliary node on $\pat(u,v)$, for each edge $uv \in E$.

Now let $C_r$ be a vertex cover for $G_r$ of size $m+k$, and suppose there exists a path, $\pat(u,v)$, such that neither $u$ nor $v$ belongs to $C_r$. Then at least $k_{uv}+1$ auxiliary nodes from $\pat(u,v)$ must belong to $C_r$ in order to cover all the edges of this path. However, by using only $k_{uv}$ auxiliary nodes from $\pat(u,v)$ and adding $u$ or $v$ to $C_r$, we still have a vertex cover of the same size, which now contains one of the endpoints of $\pat(u,v)$. Continuing this way, we can construct a vertex cover $C_r$ of size $k+m$ for $G_r$, which includes at least one endpoint from each $\pat(u,v)$, for all $uv \in E$. Therefore, $C_r$ is a vertex cover for $G$, when restricted to the nodes of $G$ (regular nodes). Since a minimum of $k_{uv}$ auxiliary nodes are needed to cover any path $\pat(u,v)$, $uv\in E$ (even if both $u$ and $v$ belong to the vertex cover), the number of regular nodes selected is at most $k$. This concludes the proof.
\end{proof}
 
Finally, we replace each edge in $G_r$ with a copy of the gadget $\varrho$ illustrated in Figure~\ref{fig:embed}(c), and denote the resulting graph by $G'$ (see figure~\ref{fig:embed}(d)). (We note that each copy is scaled, while maintaining the proportions, to match with the length of the edge it replaces.)

\begin{proof}[Proof of Theorem~\ref{thm:hardness}]
Consider the graph $G'$, constructed as above from a planar graph $G$ with maximum degree $3$ and with a gadget such that $\frac{|au|+|bu|}{|ab|}=\xi\geq 3$. The graph $G'$ can be seen as  a union of triangles, and  it thus  a triangle mesh. We prove in the following that $G$ has a vertex cover of size $k$ if and only if $G'$ has $k+m$ sources such that its stretch factor is at most $\xi$. Hence, the vertex cover problem can be reduced in polynomial time to the problem at hand, which concludes the proof.

We first show that if $G$ has a vertex cover of size $k$, then there is a set of $k+m$ sources in $G'$ whose stretch factor is $\xi$. If $G$ has a vertex cover of size $k$, then $G_r$ has a vertex cover of size $k+m$ by Lemma~\ref{lem:vc}. Recall that this vertex cover of $G_r$ can be obtained from the vertex cover of $G$ by adding every other auxiliary node on each edge of $G$. Let this vertex cover of $k+m$ nodes be the choice of sources in $G'$. Consider a pair $p$ and $q$ of nodes. We consider three cases:

\begin{enumerate}
\item  $p$ and $q$ belong to the same gadget (the same copy of $\varrho$). Let $u$, $v$, $a$ and $b$ denote the nodes of this gadget, as illustrated in Figure~\ref{fig:embed}(c), and suppose, without loss of generality, that $u$ is selected as a source. Then, $\frac{d(p,u)+d(u,q)}{d(p,q)}$ is equal to 1 if $p$ or $q$ coincides with $u$,  it is by definition equal to $\xi$ if $(p,q)=(a,b)$, and it is equal to $3$ if $p=v$ and $q=a$ or $q=b$ (see Figure~\ref{fig:gadget2}(a)). Since $\xi\geq 3$ by definition of the gadget, the maximum of $\min_i\frac{d(p,s_i^*)+d(s_i^*,q)}{d(p,q)}$, over all pairs $(p,q)$ in a gadget, is $\xi$.

\begin{figure}[t]
\begin{center}
	\includegraphics[width=11cm]{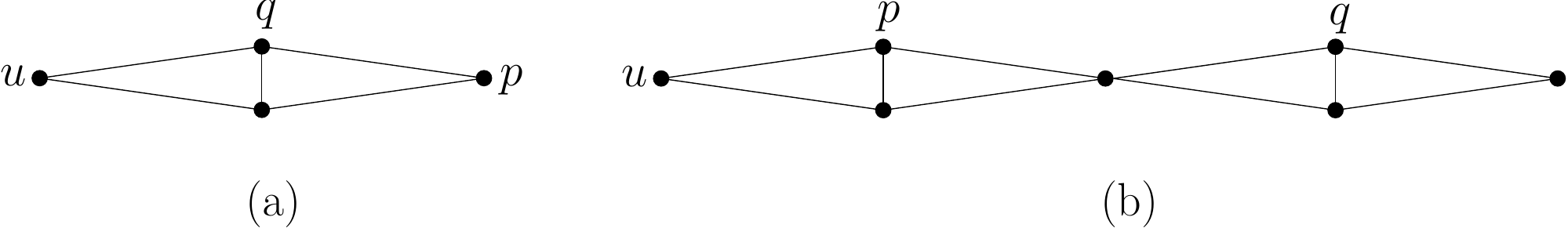}
\end{center}
\caption{For the proof of Theorem~\ref{thm:hardness}.}
\label{fig:gadget2}
\end{figure}

\item $p$ and $q$ belong to two adjacent gadgets $\varrho_1$ and $\varrho_2$. Let $\{u_1,a_1,b_1,v_1,a_2,b_2, v_2\}$ denote the nodes of the two gadgets. If the node $v_1$ which belongs to both gadgets is selected as a source then, by symmetry, the analysis of Case $1$ yields the same bound of $\xi$. Otherwise, $u_1$ and $v_2$ are sources and,  for any two nodes from two different gadgets, the maximum ratio is $2$ (the ratio is 1 if $p$ or $q$ is   one
of the sources and the ratio is 2 in the other cases; see Figure~\ref{fig:gadget2}(b)). Therefore, again the maximum ratio is at most $\xi$.

\item $p$ and $q$ belong to neither the same gadget nor to two adjacent gadgets. In this case, at least one of the nodes in a shortest path from $p$ to $q$ is selected as a source, and hence their approximate shortest path equals the geodesic shortest path and $\min_i\frac{d(p,s_i^*)+d(s_i^*,q)}{d(p,q)}=1$.
\end{enumerate}

We thus proved that the stretch factor of $G'$, for the selected $k+m$ sources, is $\xi$.

Conversely, we show  that if  $G'$ has $k+m$ sources such that its stretch factor is at most $\xi$, then  $G$ has a vertex cover of size $k$. Every  gadget must contain at least a source since, otherwise, the vertices $a$ and $b$ of a gadget with  no source are such that $\min_i\frac{d(a,s_i^*)+d(s_i^*,b)}{d(a,b)}>\xi$. For every gadget in $G'$, if vertex $u$ or $v$ is a source, we select the corresponding vertex in $G_r$, and if vertex  $a$ or $b$  is a source, we select any endpoint of the edge of  $G_r$ corresponding to the gadget. Then, at least one vertex from each edge of the graph $G_r$ is selected. Hence, $G_r$ has a vertex cover of size $k+m$, which implies that $G$ has a vertex cover of size $k$ by Lemma~\ref{lem:vc}. This completes the proof.
\end{proof}


\section{Conclusions}

We analyzed the stretch factor $\mathcal{F}_{FPS}$ of approximate geodesics computed as distances through at least one of a set of $k$ sources found using farthest point sampling. We  showed that $\mathcal{F}_{FPS}$ can be bounded by $2r^2_e \mathcal{F}^*+ 2r^2_e+8r_e+1$, where $\mathcal{F}^*$ is  stretch factor obtained using an optimal placement of the sources and $r_e$ is the ratio of the lengths of the longest and the shortest edges in the graph. Furthermore, we showed that it is NP-complete to find such an optimal placement of the sources. Note that in many practical applications $r_e\approx 1$, which  gives some evidence explaining why farthest point sampling has been used successfully for isometry-invariant surface processing.


\section*{Acknowledgments}
This research was initiated at Bellairs Workshop on Geometry and Graphs, March 10--15, 2013. The authors are grateful to Prosenjit Bose, Vida Dujmovic, Stefan Langerman, and Pat Morin for organizing the workshop and to the other workshop participants for providing a stimulating working environment.


\end{document}